%% file: main.tex
\documentclass[conference]{IEEEtran}

\usepackage{cite}

\usepackage{amsmath,amssymb,amsfonts}
\usepackage{mathtools}
\usepackage{amsthm}
\usepackage{algorithmic}
\usepackage{graphicx}
\usepackage{textcomp}
\usepackage{xcolor}
\usepackage{booktabs}
\usepackage[hidelinks]{hyperref}
\usepackage{url}
\usepackage{subcaption}
\usepackage{cleveref}
\usepackage{tikz}
\usepackage{verbatim}
\usetikzlibrary{arrows.meta}

\graphicspath{{figs/}}

\newtheorem{definition}{Definition}

\newtheorem{theorem}{Theorem}
\newtheorem{proposition}{Proposition}

\newtheorem{remark}{Remark}
\newtheorem{property}{Property}
\newtheorem{example}{Example}

\crefname{definition}{Definition}{Definitions}
\Crefname{definition}{Definition}{Definitions}
\crefname{assumption}{Assumption}{Assumptions}
\Crefname{assumption}{Assumption}{Assumptions}
\crefname{lemma}{Lemma}{Lemmas}
\Crefname{lemma}{Lemma}{Lemmas}
\crefname{theorem}{Theorem}{Theorems}
\Crefname{theorem}{Theorem}{Theorems}
\crefname{proposition}{Proposition}{Propositions}
\Crefname{proposition}{Proposition}{Propositions}
\crefname{corollary}{Corollary}{Corollaries}
\Crefname{corollary}{Corollary}{Corollaries}
\crefname{remark}{Remark}{Remarks}
\Crefname{remark}{Remark}{Remarks}
\crefname{property}{Property}{Properties}
\Crefname{property}{Property}{Properties}
\crefname{appendix}{Appendix}{Appendices}
\Crefname{appendix}{Appendix}{Appendices}

\makeatletter
\renewenvironment{proof}[1][\proofname]{\par
  \pushQED{\qed}%
  \normalfont \topsep6\p@\@plus6\p@\relax
  \trivlist\item[]%
  {\itshape #1\@addpunct{.}}\enspace\ignorespaces}
  {\popQED\endtrivlist\@endpefalse}
\makeatother

\begin{document}

\title{Compatibility and future-proof testing for multi-agent systems\\
}

\author{
\IEEEauthorblockN{Yueshan Li, Lorenzo Fiaschi, Andrea Censi, and Emilio Frazzoli}
\IEEEauthorblockA{\textit{Institute for Dynamic Systems and Control} \\
\textit{ETH Zurich}\\
Zurich, Switzerland}
}

\maketitle
\begin{abstract}
Independently developed autonomous agents, such as autonomous vehicles, must guarantee joint compatibility, such as safety and task-completion requirements, when they operate in a shared environment.
Existing methods require a centrally designed specification, or interaction mechanism before any agent is deployed.
However, such centralized design is unrealistic when agent designs are confidential to their developers, and it is not flexible enough to accommodate future agent designs.
We study a bi-level structure in which a regulator coordinates developers by publishing seed agents,
and each developer independently tests its agent to ensure that it is compatible with every seed-compatible agent, a property we call future-proofness.
Under this structure, joint compatibility is guaranteed for any agents that are designed and tested independently, without any coordination with each other.
In this work, we first formulate compatibility for general multi-agent systems
and show why it is not trivially implied by seed-compatibility.
We then characterize the future-proof agents as the closure of the seed and show that this closure is monotone in the seed.
Finally, we study a class of executable tests that involve interaction with the seed alone, and we ask to what extent such tests can distinguish the future-proof agents.
\end{abstract}

\begin{IEEEkeywords}
autonomous agents, multi-agent systems, certification, safety
\end{IEEEkeywords}

\input{sections/introduction}
\input{sections/related-work}
\input{sections/framework}

\input{sections/case-study}
\input{sections/conclusion}
\input{sections/appendix}

\bibliographystyle{IEEEtran}
\bibliography{references}

\end{document}

%% file: sections/introduction.tex
\section{Introduction}
\label{sec:introduction}
\begin{figure}[!t]
    \centering
    \includegraphics[width=\columnwidth]{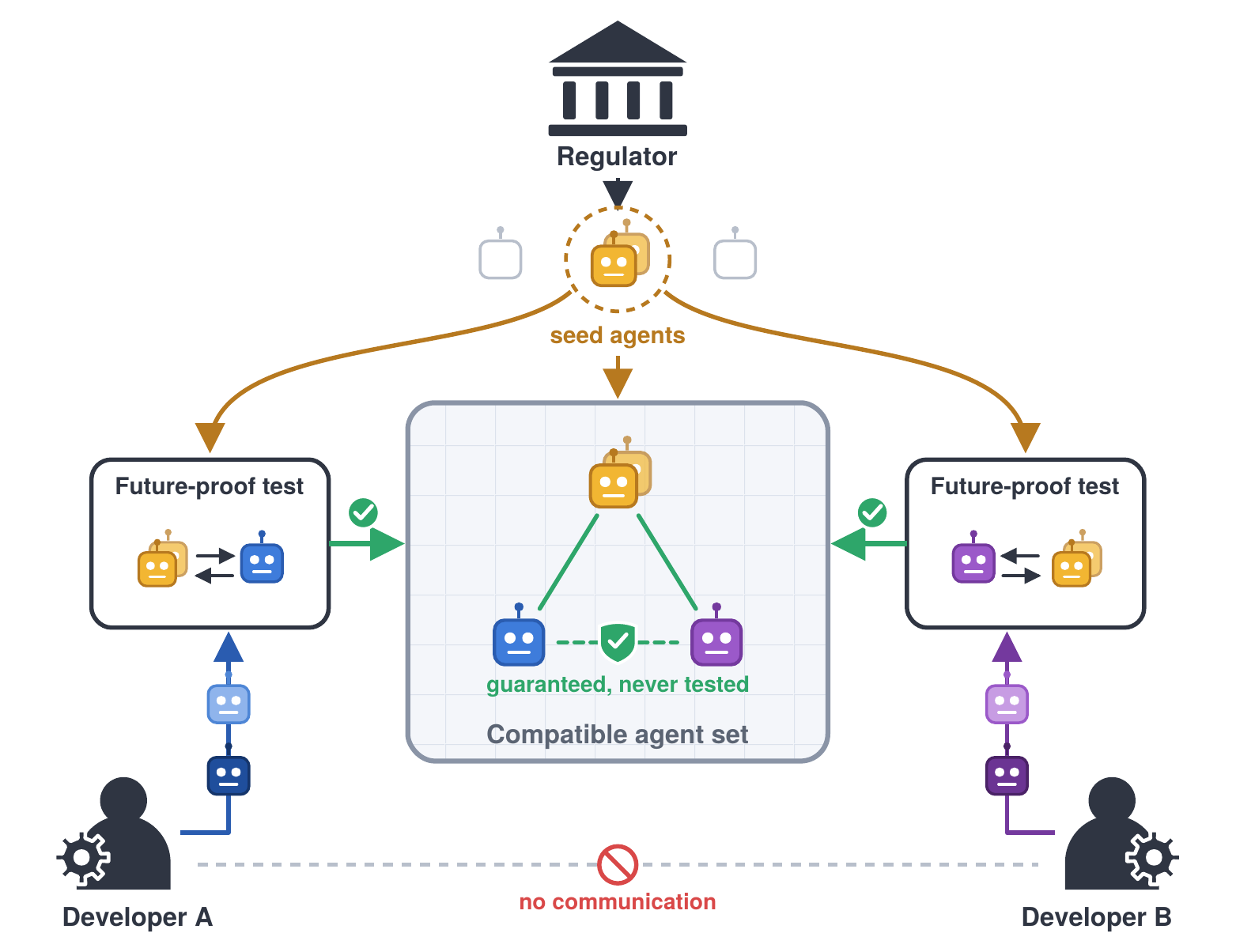}
    \caption{Scheme of future-proof testing. A regulator publishes the seed agents (top). 
    Each developer tests its own agents against the seed alone (sides), with no communication between developers. 
    Any agent that passes joins the compatible agent set (centre). It is compatible with every passing agents, although they are never tested against each other.}
    \label{fig:overview}
\end{figure}

Autonomous agents increasingly take part in daily life and share their environment with human beings and with other autonomous agents.
These agents serve different purposes, come from different vendors, and are designed independently. Guaranteeing compatibility among them is therefore a challenge.
For instance, as many as 26 companies have obtained permits to test automated vehicles on public roads in California~\cite{cadmv2026permits},
and drone operators fly through overlapping airspace~\cite{astm2021f3548}.
Beyond robotics, large-language-model (LLM) agents are expected to cooperate with human beings and with each other~\cite{hammond2025multiagent}.
In all these cases, a population of autonomous agents is deployed by different developers without direct coordination.
Nevertheless, their joint behavior must remain acceptable, so that safety and efficient task completion are guaranteed.
In short, they must be compatible.

Compatibility is typically certified through centralized design.
For instance, in aviation a conflict-resolution protocol is imposed on every participant~\cite{faa2011tcas},
and safety is guaranteed as long as all participants follow the protocol.
However, centralized design is unrealistic when agent designs are confidential to their developers
and regulators have no right to pre-approve them~\cite{nhtsa2020selfcert}.
Moreover, agents can be updated over the air after deployment, and a single update can change two million vehicles overnight~\cite{nhtsa2023recall838}.
As a result, new testing concepts are needed that rely neither on the implementation details of the agents nor on coordination between developers.

We propose \emph{future-proof testing}, illustrated in~\Cref{fig:overview}.
A regulator publishes a set of agents that serves as the common reference for all developers, which we call the \emph{seed}.
The seed is the set of agents that the regulator want to be compatible with all the future ones. For instance, this set can contain the agents already deployed, or represent the conventions that the regulator wants to see in their system.
Each developer then tests its agent independently against the seed, without coordination with other developers.
Once an agent passes the test, it is compatible with the seed and with every other passing agent, even though the passing agents are never tested against each other.

Seed-based coordination of this kind is already practiced.
For example, UN Regulation 157 scripts a family of cut-in manoeuvres that an automated lane-keeping system must survive~\cite{unece2021r157}.
However, designing the passing criterion of such a test is non-trivial.
\Cref{sec:motivating} gives a simple counterexample in which passing against the seed does not imply that the certified agents are safe together.

This paper establishes the theoretical foundations of future-proof testing with safety guarantees.
We propose a general formulation of multi-agent systems and acceptance criteria.
On this basis, we define a compatible set as a set of agent designs of which any combination can be deployed without violating the acceptance criteria.
Within this framework, we first show, both by example and in theory, that seed compatibility does not in general imply mutual compatibility.
We then characterize the \emph{future-proof} agents, which are the target of future-proof testing and provide the desired safety guarantees.
Surprisingly, the set of future-proof agents grows with the seed, so a more demanding reference admits more future-proof agents.
Finally, we ask what a developer can establish from testing against the seed alone.
We answer with a lower bound on the testing loss of seed-based testing, that is, the future-proof agents that no such test can identify.


This paper makes three contributions.
\begin{enumerate}
    \item We provide theoretical foundations for seed-based coordination and explain why testing against the seed alone does not naturally lead to compatibility among agents.
    \item We characterize the target of correct seed-based testing, namely the future-proof agents, and bound the testing loss of seed-based testing from below.
    \item We showcase the theory in two case studies: a chain of braking vehicles and a decentralized path finding in a grid-world warehouse.
\end{enumerate}

%% file: sections/related-work.tex
\section{Related Work}

In this section, we review representative works that target at the joint safety of multi-agent systems, which fall mainly into three families:
a posteriori verification, safe-by-construction synthesis,
and individual conformity assessment against a common safety standard.

\textbf{Formal verification of multi-agent systems.}
Formal verification uses mathematical methods to prove that a system satisfies a given specification that encodes safety conditions.
One widely used approach is reachability analysis, which computes the reachable sets of a system from a target set and checks whether they ever violate the specification.
This problem can be formulated as a differential game, whose solution is characterized by the Hamilton--Jacobi (HJ) equations~\cite{mitchell2005hj}.
The formulation is theoretically sound, but solving the HJ equations is computationally demanding and donot scale well with the number of agents~\cite{bansal2017hj}.
Furthermore, since the specifications of multi-agent systems inherently depend on the joint state of multiple agents,
the decomposition techniques~\cite{chen2018decomposition} that apply to many single-agent systems do not extend to general multi-agent systems.
The method has therefore been demonstrated only on small-scale multi-agent systems~\cite{10160600} or
on large-scale systems with special structure, such as sequential trajectory planning~\cite{bansal2021provably}.

To overcome the computational complexity of HJ reachability,
set propagation~\cite{althoff2021set} approximates the reachable sets by propagating the initial set through the system dynamics.
With specific set representations and system dynamics, set propagation can run in real time
and has been applied to the online verification of autonomous vehicles against other road users~\cite{althoff2014online}.
The method is limited to short-horizon verification, however, since the sets are only over-approximations of the actual reachable states and their error compounds over time.

Beyond computational complexity, formal verification is a posteriori:
it requires knowledge of the mathematical model of each agent, as well as of how the agents interact.
Such methods are not applicable to our setting, where agents must be tested independently and must guarantee safety alongside agents that are unknown at testing time.

\textbf{Synthesis of safe multi-agent systems.}
Synthesis methods provide a
multi-agent system with safe-by-construction guarantees.
One synthesis approach is the control barrier function (CBF) method~\cite{wieland2007cbf}, which encodes
a control-invariant set and overwrites any action that would drive the system out of it.
For control-affine systems, the CBF can be cast as a quadratic program and solved efficient for
both single-agent systems~\cite{ames2014acc,ames2017cbf} and centralized multi-agent
systems~\cite{wang2017safetybarrier,pickem2017robotarium}.
Decentralizing the CBF-based controllers, however, is nontrivial.
It typically requires hand-crafted responsibility splitting rules to decompose the coupled safety constraints into local ones~\cite{wang2017safetybarrier,lindemann2019decentralized,zhang2025deadlock},
and such rules face an increasingly severe trade-off between conservatism and infeasibility as the number of agents grows.

Instead of committing every agent to a prescribed control law, an alternative is to
coordinate their behaviors through behavioral contracts.
Originating in assume-guarantee reasoning for concurrent
software~\cite{jones1983rely,henzinger1998assume}, a behavioral contract specifies how an
agent should behave given certain behaviors of the others~\cite{saoud2021continuous,sharf2021contracts,dewes2025contract}.
The constracts are carefully designed such that once satisfied by all agents, the joint safety of the system is guaranteed.
For instance, Responsibility-Sensitive Safety ~\cite{shalevshwartz2017rss} prescribes the minimum distances a vehicle must maintain
from other road users under worst-case assumptions on their kinematics.
In~\cite{cai2023rules}, each vehicle is required to follow a hierarchy of traffic rules under the assumption that all others are homogeneous and follow the same rules.
In~\cite{liu2025stl}, the contract confines each agent to a predefined envelope specified in signal temporal logic.

Although behavioral contracts scale better than CBFs and admit a wider range of compliant agent designs,
they share the drawback of pre-committing to a specific conflict-resolution mechanism.
As systems grow more complex and agents are expected to exhibit richer behaviors for more demanding tasks, designing contracts that jointly guarantee safety becomes increasingly difficult.
Moreover, how the choice of contract affects the permissiveness of the resulting multi-agent system is not well understood.
Furthermore, synthesis must take place before the agents are deployed, which makes these approaches less applicable to agents already in the field, such as existing human drivers on roads.

\textbf{Standards for joint safety.}
A technical standard specifies requirements that products are expected to satisfy before entering the market. 
While most requirements and the associated conformity assessments concern the performance of a product in isolation,
standards in several fields have recently begun to address the joint behaviors of products that interact with each other.

For example, the interconnection standard for distributed energy resources (DERs)~\cite{ieee1547-2018} specifies
the interface behavior each DER must exhibit, such as voltage regulation and anti-islanding,
with the aim of keeping the grid stable and safe as conformant DERs are interconnected.
The conformity assessment of each DER is carried out independently against a common, standardized set of grid conditions~\cite{ieee1547.1-2020}.
Although effective in practice, the prescribed interface behaviors and test conditions are derived from power-system analysis and established operating limits,
and come with no formal guarantee that the resulting interconnected system is safe.

IEEE Std 2846-2022~\cite{ieee2846-2022}, which concerns assumptions in safety-related models for automated driving systems,
represents a step toward incorporating interaction into autonomous vehicle standards.
It specifies a minimum set of reasonably foreseeable assumptions on the kinematic behavior of other road users that the safety-related model of the ego vehicle must account for.
However, the standard prescribes neither a conformity assessment procedure nor the behavior expected of the ego vehicle,
and it provides no guarantee that conformant vehicles remain safe when interacting with one another.

These conformity assessments resemble our notion of future-proof testing in form:
each agent is assessed independently against a common set of reference setups,
and the requirements are imposed on agent behaviors rather than its implementation.
However, there is still no theoretical foundation for choosing the reference setups and for deciding which agents to accept so that joint safety is guaranteed.
Our theory fills this gap by formally characterizing the agents that the tests should admit in order to guarantee joint safety,
and by showing how the choice of reference affects both the set of agents that can in be admitted and the difficulty of identifying them.

\color{black}

%% file: sections/framework.tex
\section{Future-proof compatibility testing}
\color{black}
\label{sec:compatibility}
In this section, we propose a general framework for future-proof testing.
We first show that safety with the seed is not sufficient for certification, because agents that are each safe with the seed are not necessarily safe together.
We then characterize exactly which agents are future-proof, and show that enlarging the seed makes more agents future-proof, although it constrains developers more.
Finally, we bound the testing loss of seed-based testing, that is, the future-proof agents whose future-proofness a developer cannot establish from the seed alone.

We first illustrate these findings on a simple example of points in the plane, and then develop the theory.

\subsection{Introductory example}
\label{sec:motivating}
\color{black}
Consider agents that are points in the plane.
A set of points is safe if any two of them are at most $\delta$ apart.
The regulator publishes a set of seed points.
Each developer then places a point knowing only the seed.
Nevertheless, all points, including the seed, must be safe together.

\textbf{Safe with the seed, yet unsafe together.}
Let the seed be a single point $p_0$.
The points that are safe with $p_0$ form a disk of radius $\delta$ centered at $p_0$, shown as the gray region in~\Cref{fig:motiv}a.
However, two such points may lie on opposite sides of $p_0$, up to $2\delta$ apart, and hence be unsafe with each other.
Even in this simple, abstract setting, safety with the seed therefore does not imply joint safety, and a stricter certification criterion is needed.
\Cref{sec:compat-formulation} identifies the root cause of this phenomenon for general multi-agent systems.

\textbf{More seed points, more future-proof agents.}
An agent is future-proof if it is safe together with the seed and with any group of agents that is itself safe with the seed.
For the single seed $p_0$, only $p_0$ itself is future-proof.
Indeed, for any other point $x$, there is a point at distance $\delta$ from $p_0$ on the side opposite to $x$, which is safe with the seed yet unsafe with $x$.
Adding a second point to the seed, as shown in~\Cref{fig:motiv}b, shrinks the set of points that are safe with the seed.
However, more points become future-proof, because each of them now needs to be safe with fewer seed-safe points.
As more points are added to the seed, the set of seed-safe points shrinks while the set of future-proof points grows, until the two eventually coincide.
\Cref{fig:motiv}c shows this extreme case, in which every point that is safe with the seed is future-proof.
This is the best case for a given seed, since no point that is safe with the seed is rejected.
\Cref{sec:closure} formally characterizes the future-proof agents and shows that the same trend holds for general multi-agent systems.

\textbf{Checking against the seed alone.}
In this example, whether a point is future-proof can be decided from its distances to the seed points alone,
without knowing how many other points will be placed or where.
With the single seed $p_0$, a point is future-proof if and only if it coincides with $p_0$.
With two seed points, membership in the future-proof region (purple in \Cref{fig:motiv}b) is likewise determined by the distances to $p_0$ and $p_1$.
In the extreme case, a point is future-proof if and only if it is within $\delta$ of every seed point.
In general, which future-proof agents can be distinguished from non-future-proof ones is a property of the seed.
\Cref{sec:sat} characterizes this property as the separable set of the seed, which lower-bounds the testing loss of seed-based testing.
\begin{figure}[t]
  \centering
  \resizebox{\columnwidth}{!}{\input{figs/closure_growth.tex}}
  \caption{Motivating example.
  \textbf{Left.} Two points that are each safe with the seed can be unsafe with each other, and only $p_0$ itself is future-proof.
  \textbf{Middle.} A second seed point shrinks the seed-safe region but enlarges the future-proof region.
  \textbf{Right.} In the extreme case, the seed-safe region coincides with the future-proof region.}
  \label{fig:motiv}
\end{figure}

\color{black}
\subsection{Compatibility Formulation}
\label{sec:compat-formulation}
In this section we first introduce the general formulation of the multi-agent system and the associated notions of compatibility.
We then consider the intended end product of our setup, namely which agents should ultimately be admitted, given a seed, so that joint compatibility is guaranteed.
In particular, we show that checking seed-compatibility alone is not sufficient to guarantee joint compatibility.

An arena collects the four components needed to describe a shared environment of interacting agents, namely
the set of all agents that may enter the system;
the set of all outcomes that can be observed when agents operate together;
an outcome generator specifying which outcomes are observed for a given group of agents;
and the subset of outcomes that are deemed unacceptable.
In a robotics context, for instance, a typical observable outcome is the joint trajectory of the agents, and a natural unacceptable outcome is a collision between them.
The outcome generator can then be instantiated as a simulator that rolls out a given group of agents over a collection of configurations.

\color{black}
We model the input of the outcome generator as an arrangement of agents, which captures the general case in which the environment offers a set of tasks and each agent can be assigned to any one of them.
\begin{definition}[Arrangement]
    \label{def:arr}
    Let $F$ be a set. An \emph{arrangement} of elements from $F$ is a finite sequence $(a_1, \dots, a_n)$ with $n \in \mathbb{N}$ and $a_i \in F$ for all $i \in \{1, \dots, n\}$.
    
    We denote the space of all such sequences from $F$ by $\mathrm{Arr}(F)$, i.e.
    \begin{equation*}
        \mathrm{Arr}(F) \coloneqq \bigcup_{n \in \mathbb{N}} F^n,
    \end{equation*}
    where $F^n$ denotes the $n$-fold Cartesian product of $F$.
\end{definition}
\color{black}
\begin{definition}[Arena]
An \emph{arena} $\mathcal{A} = (\Pi, \mathcal{O}, \Phi, \mathcal{U})$ consists of the following four components.
\begin{itemize}
    \item An agent space $\Pi$.
    \item An observable outcome space $\mathcal{O}$.
    \item {\color{black}An outcome generator $\Phi \colon \mathrm{Arr}(\Pi) \to P(\mathcal{O})$}.
    \item A set of unacceptable outcomes $\mathcal{U} \subseteq \mathcal{O}$.
\end{itemize}
\end{definition}
\begin{example}[Point arena]
\label{ex:point-arena}
The motivating example is captured by the arena $\mathcal{A} = (\Pi, \mathcal{O}, \Phi, \mathcal{U})$ with
\begin{itemize}
    \item $\Pi \coloneqq \mathbb{R}^2$, i.e.\ each agent is a point in the plane.
    \item $\mathcal{O} \coloneqq \mathbb{R}_{\ge 0}$, the possible distances between agents.
    \item {\color{black}$\Phi\big((a_1, \dots, a_n)\big) \coloneqq \big\{ \max_{i,j} \|a_i - a_j\| \big\}$, which returns the largest pairwise distance within an arrangement.}
    \item $\mathcal{U} \coloneqq (\delta, \infty)$, i.e.\ distances exceeding the threshold $\delta$.
\end{itemize}
\end{example}
\begin{remark}
    The arena definition generalizes to non-deterministic systems if the outcome generator
    outputs all outcomes under bounded disturbances, or all outcomes above a certain likelihood under stochasticity.
\end{remark}
\color{black}
A group of agents is said to be compatible only when all arrangements of its members are acceptable.
\begin{definition}[Compatible set]
\label{def:compatible}
A set of agents $F \subseteq \Pi$ is \emph{compatible} in arena $\mathcal{A}$ iff $\forall a \in \mathrm{Arr}(F)$, $\Phi(a) \cap \mathcal{U} = \emptyset$.

We denote the collection of all compatible sets in arena $\mathcal{A}$ as $\mathcal{C}_{\mathcal{A}}$.
\end{definition}
By~\Cref{def:compatible}, if a set of agents is compatible, then any of its subsets is also compatible. This property is stated below and proven in~\Cref{app:proofs}.
\begin{property}[Compatibility is closed under subsets]
\label{prop:subset-closed}
If $G \in \mathcal{C}_\mathcal{A}$ and $F \subseteq G$, then $F \in \mathcal{C}_\mathcal{A}$.
\end{property}


The seed published by the regulator can be any set of agents, as long as it is itself compatible.
\begin{definition}[Seed]
\label{def:seed}
A seed is a compatible set $F_0 \in \mathcal{C}_\mathcal{A}$.
\end{definition}

Given a seed, the ultimate goal of testing is to admit a set of agents that form a compatible set together with the seed.
We establish the basic properties of such sets and leave the proofs to~\Cref{app:proofs}.
\begin{definition}[Compatible sets from a seed]
\label{def:C-F0}
Given a seed $F_0$, we write $\mathcal{C}_\mathcal{A}(F_0)\coloneqq \{ F \in \mathcal{C}_\mathcal{A} \mid F_0 \subseteq F \} \subseteq \mathcal{C}_\mathcal{A}$
for the collection of compatible sets containing the seed.
\end{definition}

\begin{definition}[Maximal compatible set from a seed]
\label{def:maximal-pop}
Given a seed $F_0$,
a compatible set $F \in \mathcal{C}_\mathcal{A}(F_0)$ is \emph{maximal} 
iff $\forall \pi \in \Pi\setminus F$, $ F \cup\{\pi\} \notin \mathcal{C}_\mathcal{A}(F_0)$.
\end{definition}
\begin{proposition}[Existence of a maximal compatible set from a seed]
\label{prop:existence}
    Given a seed $F_0$, a maximal compatible set from $F_0$ always exists.
\end{proposition}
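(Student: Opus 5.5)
The plan is to apply Zorn's lemma to the collection $\mathcal{C}_\mathcal{A}(F_0)$, partially ordered by inclusion. First, this collection is nonempty: by \Cref{def:seed} the seed $F_0$ is itself compatible, and trivially $F_0 \subseteq F_0$, so $F_0 \in \mathcal{C}_\mathcal{A}(F_0)$. Any maximal element $F$ of $(\mathcal{C}_\mathcal{A}(F_0),\subseteq)$ satisfies \Cref{def:maximal-pop}. Indeed, if $F \cup \{\pi\} \in \mathcal{C}_\mathcal{A}(F_0)$ for some $\pi \notin F$, it would strictly contain $F$, contradicting maximality.

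The main step is to show that every chain $\{F_i\}_{i\in I}$ in $\mathcal{C}_\mathcal{A}(F_0)$ has an upper bound in $\mathcal{C}_\mathcal{A}(F_0)$. For a nonempty chain I take $G \coloneqq \bigcup_{i\in I} F_i$. For the empty chain, $F_0$ serves as an upper bound. Clearly $F_0 \subseteq G$, so it remains to show $G \in \mathcal{C}_\mathcal{A}$.

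Take any arrangement $a = (a_1,\dots,a_n) \in \mathrm{Arr}(G)$. Each $a_k$ lies in some $F_{i_k}$. There are only finitely many indices, and the $F_i$ are totally ordered by inclusion, so one of them, say $F_{i^*}$, contains all the others among $F_{i_1},\dots,F_{i_n}$. Hence $a \in \mathrm{Arr}(F_{i^*})$. Compatibility of $F_{i^*}$ (\Cref{def:compatible}) then gives $\Phi(a)\cap\mathcal{U}=\emptyset$. The case $n=0$ is handled identically using $F_0$, or any chain member. Therefore $G$ is compatible, and Zorn's lemma yields a maximal element.

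The key point, and the only place where care is needed, is this finiteness argument. It relies on arrangements being \emph{finite} sequences (\Cref{def:arr}), which makes compatibility a "finite character" property. Were arrangements allowed to be infinite, the union of a chain could fail to be compatible. Everything else is routine. If one prefers to avoid Zorn's lemma, the Teichmüller–Tukey lemma applied to the finite-character family $\{H \subseteq \Pi \mid F_0 \cup H \in \mathcal{C}_\mathcal{A}\}$ gives the same result.
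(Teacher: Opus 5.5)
Your proposal is correct and follows essentially the same argument as the paper: Zorn's lemma on $\mathcal{C}_\mathcal{A}(F_0)$ ordered by inclusion, with $F_0$ bounding the empty chain, and the union of a nonempty chain shown compatible by placing each finite arrangement inside the largest of finitely many chain members. Your explicit treatment of the $n=0$ arrangement is a small detail the paper leaves implicit.
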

\begin{example}
\label{ex:compat}
In the point arena, any set with diameter at most $\delta$ is compatible, and the seed can be any such set.
The maximal compatible sets from a seed are the bodies of constant width exactly $\delta$ that contain the seed, 
which are in general not unique~\cite{martini2019bodies}.
\end{example}

A straightforward approach to admitting agents is to check whether each of them is compatible with the seed.
In the following we show that this check is insufficient to guarantee joint compatibility.
\begin{definition}[Seed-compatible agents]
\label{def:seed-compatible}
Given a seed $F_0$, the set of \emph{seed-compatible agents} is
\begin{equation*}
    S(F_0) \coloneqq \{ \pi \in \Pi \mid F_0 \cup \{\pi\} \in \mathcal{C}_\mathcal{A} \}.
\end{equation*}
\end{definition}

The following theorem reveals the connection between the seed-compatible agents and the maximal compatible sets from the seed.
\begin{theorem}[Seed-compatible agents form the union of maximal compatible sets]
\label{thm:union}
Given a seed $F_0$, 
\begin{equation*}
    S(F_0) = \bigcup_{F \in \mathcal{M}(F_0)} F,
\end{equation*}
where $\mathcal{M}(F_0)$ denotes the collection of all maximal compatible sets from $F_0$.

Moreover, $S(F_0)\in\mathcal{C}_{\mathcal{A}}(F_0)$ iff the maximal compatible set from $F_0$ is unique.
\end{theorem}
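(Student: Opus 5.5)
We prove the two inclusions of the set identity separately, and then derive the uniqueness characterization from it.

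\textbf{Inclusion $\supseteq$.} Let $\pi \in F$ for some $F \in \mathcal{M}(F_0)$. Since $F \in \mathcal{C}_\mathcal{A}(F_0)$, we have $F_0 \subseteq F$ and $F \in \mathcal{C}_\mathcal{A}$. Hence $F_0 \cup \{\pi\} \subseteq F$, and \Cref{prop:subset-closed} gives $F_0 \cup \{\pi\} \in \mathcal{C}_\mathcal{A}$. Therefore $\pi \in S(F_0)$.

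\textbf{Inclusion $\subseteq$.} Let $\pi \in S(F_0)$. Then $F_0 \cup \{\pi\}$ is compatible and contains $F_0$, so it is itself a seed (\Cref{def:seed}). We apply \Cref{prop:existence} to the seed $F_0' \coloneqq F_0 \cup \{\pi\}$ and obtain a maximal compatible set $F$ from $F_0'$. It remains to check that $F$ is also maximal from $F_0$. First, $F_0 \subseteq F_0' \subseteq F$ and $F$ is compatible, so $F \in \mathcal{C}_\mathcal{A}(F_0)$. Next, take any $\pi' \notin F$. If $F \cup \{\pi'\}$ were in $\mathcal{C}_\mathcal{A}(F_0)$, then it would be compatible and contain $F_0'$, so it would lie in $\mathcal{C}_\mathcal{A}(F_0')$. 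This contradicts the maximality of $F$ from $F_0'$. Hence $F \in \mathcal{M}(F_0)$ and $\pi \in F$.

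The only subtle point in this direction is that existence must be invoked for the enlarged seed, not for $F_0$ itself. This step relies on \Cref{prop:existence} holding for every seed, which is how it is stated; presumably its proof uses Zorn's lemma, since compatibility is a finitary condition on arrangements.

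\textbf{The "moreover" part.} By \Cref{prop:existence}, $\mathcal{M}(F_0) \neq \emptyset$.

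($\Leftarrow$) Suppose $\mathcal{M}(F_0) = \{M\}$. Then the identity gives $S(F_0) = M \in \mathcal{C}_\mathcal{A}(F_0)$.

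($\Rightarrow$) Suppose $S(F_0) \in \mathcal{C}_\mathcal{A}(F_0)$, and let $M \in \mathcal{M}(F_0)$. The identity gives $M \subseteq S(F_0)$. If some $\pi \in S(F_0) \setminus M$ existed, then $M \cup \{\pi\} \subseteq S(F_0)$. By \Cref{prop:subset-closed}, $M \cup \{\pi\}$ is compatible, and it contains $F_0$. This contradicts the maximality of $M$. Hence $M = S(F_0)$ for every maximal $M$, so the maximal compatible set from $F_0$ is unique and equals $S(F_0)$.
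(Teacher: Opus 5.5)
Your proof is correct and follows essentially the same route as the paper: you extend $F_0\cup\{\pi\}$ to a maximal set by applying \Cref{prop:existence} to the enlarged seed and transferring maximality back to $F_0$, and you derive the uniqueness characterization by the same maximality argument. The only cosmetic difference is that the paper first proves the general fact that every $F\in\mathcal{C}_\mathcal{A}(F_0)$ lies in some element of $\mathcal{M}(F_0)$, which it reuses in \Cref{thm:future-proof-closure}, whereas you prove it only for $F_0\cup\{\pi\}$.
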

\begin{proof}
    We first show that every $F \in \mathcal{C}_\mathcal{A}(F_0)$ is contained in some $F^\star \in \mathcal{M}(F_0)$.
    Since $F$ is compatible, applying \Cref{prop:existence} to the seed $F$ gives a maximal compatible set $F^\star$ from $F$,
    which is also maximal from $F_0$ because $F_0 \subseteq F \subseteq F^\star$ and every superset of $F^\star$ contains $F$.

    For the equality, let $\pi \in S(F_0)$.
    Then $F_0 \cup \{\pi\} \in \mathcal{C}_\mathcal{A}(F_0)$ is contained in some $F^\star \in \mathcal{M}(F_0)$, so $\pi$ lies in the union.
    Conversely, if $\pi \in F$ for some $F \in \mathcal{M}(F_0)$, then $F_0 \cup \{\pi\} \subseteq F$ is compatible by \Cref{prop:subset-closed}, so $\pi \in S(F_0)$.

    For the uniqueness condition, if $\mathcal{M}(F_0) = \{F^\star\}$, then the equality gives $S(F_0) = F^\star \in \mathcal{C}_\mathcal{A}(F_0)$.
    Conversely, let $S(F_0) \in \mathcal{C}_\mathcal{A}(F_0)$ and let $F \in \mathcal{M}(F_0)$, which exists by \Cref{prop:existence}.
    For every $\pi \in S(F_0)$, the set $F \cup \{\pi\}$ is contained in $S(F_0)$ by the equality, so it is compatible by \Cref{prop:subset-closed} and contains $F_0$.
    Hence $\pi \in F$ by the maximality of $F$.
    Thus every $F \in \mathcal{M}(F_0)$ equals $S(F_0)$.
\end{proof}

By~\Cref{thm:union}, whenever the maximal compatible set from the seed is not unique, one can find seed-compatible agents that are incompatible with each other.
Such non-uniqueness is common. Even in the simple geometric setting of the motivating example, the maximal compatible set is unique only in special cases.
In fact, each maximal compatible set can be viewed as the set of agents committing to a specific convention.
For instance, two conventions prevalent worldwide are driving on the left or on the right side of the road. 
Each convention forms a maximal compatible set, yet two cars that follow different conventions are not compatible with each other.

\color{black}
\subsection{Future-proof agents are characterized by the closure of the seed}
\label{sec:closure}
As seen in the last section, being seed-compatible does not guarantee mutual compatibility, because different developers can commit to different conventions without coordination.
Therefore, in this section we introduce the concept of future-proof agents admitted by a seed.
Such agents are not only seed-compatible but also do not pre-commit to any convention.
Hence they are guaranteed to be compatible with any other future-proof agent, and they remain compatible once the regulator commits to a specific convention.

We write $\mathcal{C}$ for $\mathcal{C}_\mathcal{A}$ when the arena is clear from the context.
\begin{definition}[Future-proof agent admitted by a seed]
\label{def:future-proof}
Given a seed $F_0$, an agent $\pi\in\Pi$ is \emph{future-proof} if $\forall F' \in \mathcal{C}(F_0)$, $F' \cup \{\pi\} \in \mathcal{C}(F_0)$.
The set of all future-proof agents admitted by $F_0$ is denoted by $FP(F_0) \subseteq \Pi$. 
\end{definition}

The set of future-proof agents turns out to be characterized by the closure of the seed, which is defined as follows.
\begin{definition}[Closure of a seed]
\label{def:li-closure}
The \emph{closure} of a seed $F_0$ is the intersection of all maximal compatible sets from $F_0$, i.e.
\begin{equation*}
    Cl(F_0) \coloneqq \bigcap_{F \in \mathcal{M}(F_0)} F.
\end{equation*}
\end{definition}
\begin{theorem}[The closure characterizes future-proof agents]
\label{thm:future-proof-closure}
Given a seed $F_0$, an agent $\pi \in \Pi$ is future-proof if and only if $\pi \in Cl(F_0)$, i.e.\ $FP(F_0) = Cl(F_0)$.
\end{theorem}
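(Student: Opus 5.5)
We prove the two inclusions $FP(F_0) \subseteq Cl(F_0)$ and $Cl(F_0) \subseteq FP(F_0)$ separately. Both rest on two earlier facts. First, \Cref{prop:existence} guarantees that maximal compatible sets exist. Second, the first step in the proof of \Cref{thm:union} shows that every $F \in \mathcal{C}(F_0)$ is contained in some $F^\star \in \mathcal{M}(F_0)$. We also use \Cref{prop:subset-closed} throughout.

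\emph{Future-proof implies closure.} Let $\pi \in FP(F_0)$ and fix any $F \in \mathcal{M}(F_0)$. We must show $\pi \in F$. Since $F \in \mathcal{C}(F_0)$, future-proofness applied with $F' = F$ gives $F \cup \{\pi\} \in \mathcal{C}(F_0)$. If $\pi \notin F$, this contradicts the maximality of $F$ in \Cref{def:maximal-pop}. Hence $\pi \in F$. Since $F$ was arbitrary, $\pi \in Cl(F_0)$. The argument is valid even if $\mathcal{M}(F_0)$ were empty, but \Cref{prop:existence} rules that case out anyway. This matters for the converse, because an empty intersection would otherwise be all of $\Pi$.

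\emph{Closure implies future-proof.} Let $\pi \in Cl(F_0)$ and $F' \in \mathcal{C}(F_0)$. By the embedding fact, there is $F^\star \in \mathcal{M}(F_0)$ with $F' \subseteq F^\star$. Since $\pi$ lies in every maximal set, in particular $\pi \in F^\star$. Then $F' \cup \{\pi\} \subseteq F^\star$, so $F' \cup \{\pi\}$ is compatible by \Cref{prop:subset-closed}. It contains $F_0$ because $F'$ does, so $F' \cup \{\pi\} \in \mathcal{C}(F_0)$, as required.

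\emph{Main obstacle.} The only nontrivial ingredient is the embedding of an arbitrary $F' \in \mathcal{C}(F_0)$ into a maximal compatible set. We take it from the proof of \Cref{thm:union}: apply \Cref{prop:existence} with $F'$ as the seed, and observe that a set maximal from $F'$ is also maximal from $F_0$. That in turn relies on \Cref{prop:existence}, presumably a Zorn's lemma argument using that compatibility is of finite character, since every arrangement is finite. We take it as given. Everything else is direct unfolding of the definitions.
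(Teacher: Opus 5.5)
Your proposal is correct and follows the paper's proof essentially step for step. For $FP(F_0)\subseteq Cl(F_0)$ you apply future-proofness to each $F^\star\in\mathcal{M}(F_0)$ and then use its maximality; for the converse you embed $F'$ into some $F^\star\in\mathcal{M}(F_0)$ via the first step of the proof of \Cref{thm:union} and conclude with \Cref{prop:subset-closed}, exactly as the paper does.
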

\color{black}
\begin{proof}
    ($\impliedby$) Let $\pi \in Cl(F_0)$ and $F' \in \mathcal{C}(F_0)$.
    By~\Cref{thm:union} there exists $F^\star \in \mathcal{M}(F_0)$ with $F' \subseteq F^\star$.
    Since $\pi \in Cl(F_0) \subseteq F^\star$, $F' \cup \{\pi\} \subseteq F^\star$, which implies that $F' \cup \{\pi\}$ is compatible by~\Cref{prop:subset-closed}.
    Furthermore, we have $F_0 \subseteq F'$, which means $F' \cup \{\pi\} \in \mathcal{C}(F_0)$.
    Since $F'$ is arbitrary, $\pi \in FP(F_0)$.

    ($\implies$) Let $\pi \in FP(F_0)$ and $F^\star \in \mathcal{M}(F_0)$.
    Since $F^\star \in \mathcal{C}(F_0)$, \Cref{def:future-proof} gives $F^\star \cup \{\pi\} \in \mathcal{C}(F_0)$,
    and the maximality of $F^\star$ implies $\pi \in F^\star$.
    Since $F^\star$ is arbitrary, $\pi \in Cl(F_0)$.
\end{proof}
\color{black}
\Cref{def:li-closure} defines a standard closure operator on the space of compatible sets. 
Below we prove its monotonicity. The remaining properties of a closure operator are proven in \Cref{app:proofs}.
\begin{property}[Closure of a seed is monotone]
\label{prop:closure-monotone}
For any compatible sets $F_0 \subseteq F_0' \subseteq \Pi$, we have $Cl(F_0) \subseteq Cl(F_0')$.
\end{property}
\begin{proof}
    We first show $\mathcal{M}(F_0') \subseteq \mathcal{M}(F_0)$.
    Let $F \in \mathcal{M}(F_0')$. Then $F$ is compatible and $F_0 \subseteq F_0' \subseteq F$, so $F \in \mathcal{C}(F_0)$.
    Since $F$ is maximal, for every $\pi \in \Pi \setminus F$, the set $F \cup \{\pi\}$ is not compatible, hence $F \cup \{\pi\} \notin \mathcal{C}(F_0)$.
    Thus $F \in \mathcal{M}(F_0)$.
    An intersection over a smaller collection of sets is larger, so
    \begin{equation*}
        Cl(F_0) = \bigcap_{F \in \mathcal{M}(F_0)} F \subseteq \bigcap_{F \in \mathcal{M}(F_0')} F = Cl(F_0').
    \end{equation*}
\end{proof}
The monotonicity explains the observation in \Cref{sec:motivating}, where extending the seed admits more future-proof agents.
From the perspective of conventions, extending the seed means that the regulator has committed to a more specific convention.
This rules out the maximal compatible sets that do not contain the new seed agents, and the intersection of the remaining sets contains more future-proof agents.

Compared with the behavioral contract approach, our setup coordinates conventions through instances of seed agents instead of a specification of behaviors.
Nevertheless, the desired end product is in both cases a set of compatible agents.
The difference is that a behavioral contract is designed to commit to a particular convention, so that all agents satisfying it form a compatible set.
In many environments, designing a particular convention in advance can be challenging given the existing agents, and it can be overly restrictive for future agent designs.
Therefore, in our setting we leave the conventions implicit in the regulator's choice of seed,
and we keep open the possibility of extending the seed to commit to a more specific convention.

\color{black}
\subsection{Future-proof single-agent tests}
\label{sec:sat}
In this section we switch the perspective to the developer and discuss when, and to what extent, developers are able to test the future-proofness of their agents without coordinating with other developers.
Specifically, we define seed-only tests, whose results are determined by interaction with the seed alone, and we discuss their fundamental testing loss.

A seed-only test consists of a set of single-agent experiments.
\color{black}
\begin{definition}[Single-agent experiment]
\label{def:experiment}
    Let $\bullet$ be a placeholder. Given a seed $F_0$, the set of \emph{single-agent experiments} is $\mathcal{E}_{F_0} \coloneqq \mathrm{Arr}(F_0 \cup \{\bullet\})$.
    For $e \in \mathcal{E}_{F_0}$ and $\pi \in \Pi$, $e[\bullet \mapsto \pi] \in \mathrm{Arr}(F_0 \cup \{\pi\})$ denotes the arrangement obtained from $e$ by replacing every occurrence of $\bullet$ with $\pi$.
\end{definition}
\color{black}
Two agents that yield the same outcome in every single-agent experiment cannot be told apart by the seed.
We call such agents $F_0$-equivalent.
\color{black}
\begin{definition}[$F_0$-equivalence]
\label{def:equivalence}
    Given a seed $F_0$, two agents $\pi, \pi' \in \Pi$ are \emph{$F_0$-equivalent}, written $\pi \sim_{F_0} \pi'$, iff
    \begin{equation*}
        \Phi\big(e[\bullet \mapsto \pi]\big) = \Phi\big(e[\bullet \mapsto \pi']\big) \qquad \forall e \in \mathcal{E}_{F_0}.
    \end{equation*}
    The equivalence class of $\pi$ under $\sim_{F_0}$ is denoted by $[\pi]_{F_0}$.
\end{definition}
\color{black}
Under a seed-only test, all $F_0$-equivalent agents are indistinguishable, so the same test must either admit all of them or reject all of them.
\begin{definition}[Seed-only test (SOT)]
\label{def:st}
    Given a seed $F_0$, a \emph{seed-only test} is a predicate $t \colon \Pi \to \{\top, \bot\}$ that satisfies
    \begin{equation*}
        \pi \sim_{F_0} \pi' \implies t(\pi) = t(\pi') \qquad \forall \pi, \pi' \in \Pi.
    \end{equation*}
\end{definition}

Among all possible designs of SOTs, we focus on those that are sound and have a small testing loss.
We call a sound SOT a \emph{future-proof SOT}, since every agent it admits is future-proof.
\begin{definition}[Admissible set of SOT]
\label{def:admissible}
    Given a seed $F_0$ and a SOT $t$, the \emph{admissible set} $F_t$ of $t$ is defined as
    \begin{equation*}
        F_t \coloneqq \{ \pi \in \Pi \mid t(\pi) = \top \}.
    \end{equation*}
\end{definition}
\begin{definition}[Future-proof SOT]
\label{def:future-proof-sot}
    Given a seed $F_0$, a SOT $t$ is \emph{future-proof} iff
    $F_t \subseteq FP(F_0)$.
\end{definition}

The testing loss of a future-proof SOT consists of the future-proof agents that it fails to admit.
It depends on the design of the SOT. For instance, an extremely conservative SOT $t$ that rejects all agents is future-proof, but its testing loss $\mathcal{L}_t$ is the entire set $FP(F_0)$.
Here we discuss the lower bound on the testing loss over all possible SOT designs,
which is determined by the seed and the arena alone.
\begin{definition}[Testing loss of a SOT]
\label{def:testing-loss}
    Given a seed $F_0$ and a SOT $t$, the \emph{testing loss} of $t$ is $\mathcal{L}_t \coloneqq FP(F_0) \setminus F_t$.
\end{definition}
\begin{definition}[Separable future-proof agents]
\label{def:separable}
    Given a seed $F_0$, the set $Separable(F_0) \subseteq FP(F_0)$ of \emph{separable future-proof agents} consists of all agents $\pi \in FP(F_0)$ such that $[\pi]_{F_0} \subseteq FP(F_0)$.
\end{definition}
\begin{proposition}[The separable set lower-bounds the testing loss]
\label{prop:separable}
    Given a seed $F_0$, every future-proof SOT $t$ satisfies $\mathcal{L}_t \supseteq FP(F_0) \setminus Separable(F_0)$.
\end{proposition}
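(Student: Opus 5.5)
The plan is to show directly that every future-proof agent outside $Separable(F_0)$ is rejected by $t$. Equivalently, we show $F_t \cap FP(F_0) \subseteq Separable(F_0)$. Together with $\mathcal{L}_t = FP(F_0)\setminus F_t$ from \Cref{def:testing-loss}, this yields the claim by taking complements inside $FP(F_0)$.

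Fix a future-proof SOT $t$ and let $\pi \in FP(F_0)\setminus Separable(F_0)$. By \Cref{def:separable}, the equivalence class $[\pi]_{F_0}$ is not contained in $FP(F_0)$. So there exists $\pi' \sim_{F_0} \pi$ with $\pi' \notin FP(F_0)$.

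Since $t$ is future-proof, \Cref{def:future-proof-sot} gives $F_t \subseteq FP(F_0)$. Hence $\pi' \notin F_t$, i.e.\ $t(\pi') = \bot$. Because $t$ is a seed-only test, \Cref{def:st} together with $\pi \sim_{F_0} \pi'$ forces $t(\pi) = t(\pi') = \bot$. Thus $\pi \notin F_t$, and since $\pi \in FP(F_0)$, we conclude $\pi \in \mathcal{L}_t$.

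As $\pi$ was arbitrary, $FP(F_0)\setminus Separable(F_0) \subseteq \mathcal{L}_t$. There is no real obstacle here. The only point needing care is that \Cref{def:separable} quantifies over the whole class $[\pi]_{F_0}$, so non-membership in $Separable(F_0)$ yields a non-future-proof witness in the same class. The argument uses only \Cref{def:st}, \Cref{def:future-proof-sot}, and \Cref{def:testing-loss}; \Cref{thm:future-proof-closure} is not needed.
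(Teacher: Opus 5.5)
Your proof is correct and is essentially the paper's argument in contrapositive form. The paper argues directly that any admitted $\pi$ satisfies $[\pi]_{F_0} \subseteq F_t \subseteq FP(F_0)$, hence $\pi \in Separable(F_0)$. You instead start from a non-separable $\pi$, extract a non-future-proof witness $\pi'$ in its class, and transfer the rejection $t(\pi')=\bot$ back to $\pi$ using the same two definitions.
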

\begin{proof}
    If a future-proof SOT $t$ admits $\pi$, then it admits $[\pi]_{F_0}$ by~\Cref{def:st}, so $[\pi]_{F_0} \subseteq F_t \subseteq FP(F_0)$ by~\Cref{def:future-proof-sot}.
    Hence $\pi \in Separable(F_0)$.
\end{proof}
\Cref{prop:separable} shows that the separable set is the best that any future-proof SOT can achieve, and it is a fundamental property of the arena and the seed.
Specifically, the lower bound depends on the outcome space. Refining the outcomes, for example collecting the trajectories instead of a Boolean indicator of safe or not, refines the equivalence classes and can only enlarge the separable set.
Case study 1 shows an example of the arena where any seed gives $Separable(F_0) = FP(F_0)$, so zero-loss SOT can in theory be achieved for any seed.
In other cases, such as case study 2, the testing loss also depends on the seed selection. Intuitively, a seed whose experiments elicit more diverse interactions separates more agents.

We leave seed selection and practical SOT design as future works, and we discuss them in~\Cref{sec:conclusion}.

%% file: figs/closure_growth.tex
%

\def\PanelSep{3.0}
\def\PanelCx{5.82}
\def\PanelCdy{-0.38}
\def\PanelAdy{-0.2887}
\def\LblApzero{(-0.22,-0.19)}
\def\LblBpzero{(-0.70,-0.46)}\def\LblBpone{(0.70,-0.46)}
\def\LblCpzero{(-0.72,-0.46)}\def\LblCpone{(0.72,-0.46)}\def\LblCptwo{(0,0.76)}
\def\LblDeltaB{(0,-0.58)}
\def\LblDeltaC{(0,-0.58)}
\def\PairX{0.93}%
\def\LblPair{(0.62,-0.17)}%
\def\LegY{-1.55}
\def\LegX{-0.95}
\def\LegGap{2.02}
\def\SeedFont{\large}
\def\AnnFont{\normalsize}
\definecolor{cBlue}{HTML}{2C6FBB}%
\definecolor{cOrange}{HTML}{E07B1A}%
\definecolor{cClosure}{HTML}{9E7FD0}
\definecolor{cRed}{HTML}{D1495B}%
\definecolor{cGray}{HTML}{8A8A8A}%
\definecolor{cSafe}{HTML}{D9D9D9}

\begin{tikzpicture}[x=1.8cm,y=1.8cm,>=Stealth,line join=round,text=black]

  \def\Pzero{(-0.5,-0.2887)}
  \def\Pone{(0.5,-0.2887)}
  \def\Ptwo{(0,0.5774)}

  \newcommand{\seeddot}[3]{\fill[cRed] #1 circle(1.7pt);
     \node[cRed,font=\SeedFont] at #3 {#2};}

  \begin{scope}[shift={(0,\PanelAdy)}]
     \fill[cSafe,fill opacity=0.6] (0,0) circle(1);
     \draw[cGray,dashed,line width=0.6pt] (0,0) circle(1);
     \draw[cGray,line width=0.7pt] (0, 0.5) circle(0.5);              
     \draw[cOrange,line width=1.1pt,fill=cOrange,fill opacity=0.06] (-0.5,0) circle(0.5);
     \draw[cBlue,line width=1.1pt,fill=cBlue,fill opacity=0.06]     ( 0.5,0) circle(0.5);
     \fill[cClosure] (0,0) circle(2.6pt);                            
     \draw[black,dashed,line width=0.7pt] (-\PairX,0) -- (\PairX,0);
     \fill[black] (-\PairX,0) circle(1.4pt);
     \fill[black] ( \PairX,0) circle(1.4pt);
     \node[font=\AnnFont] at \LblPair {$>\delta$};
     \seeddot{(0,0)}{$p_0$}{\LblApzero}
  \end{scope}

  \begin{scope}[shift={(\PanelSep,0)}]
     \begin{scope}
        \clip \Pzero circle(1);
        \fill[cSafe,fill opacity=0.6] \Pone circle(1);
     \end{scope}
     \draw[cGray,dashed,line width=0.6pt]
        (0,0.5774) arc(120:240:1) arc(300:420:1);
     \draw[cGray,line width=0.8pt] (0,-0.2887) circle(0.5);
     \draw[cBlue,line width=1.1pt,fill=cBlue,fill opacity=0.07]
        (-0.5,-0.2887) arc(240:300:1) arc(0:60:1) arc(120:180:1) -- cycle;
     \draw[cOrange,line width=1.1pt,fill=cOrange,fill opacity=0.07]
        (-0.5,-0.2887) arc(120:60:1) arc(0:-60:1) arc(240:180:1) -- cycle;
     \draw[cClosure,line width=1.1pt,fill=cClosure,fill opacity=0.55]
        (-0.5,-0.2887) arc(120:60:1) arc(300:240:1) -- cycle;
     \draw[cRed,line width=0.9pt] \Pzero -- \Pone;
     \node[cRed,font=\AnnFont] at \LblDeltaB {$\delta$};
     \seeddot{\Pzero}{$p_0$}{\LblBpzero}
     \seeddot{\Pone}{$p_1$}{\LblBpone}
  \end{scope}

  \begin{scope}[shift={(\PanelCx,\PanelCdy)}]
     \draw[cClosure,line width=1.3pt,fill=cClosure,fill opacity=0.45]
        (-0.5,-0.2887) arc(240:300:1) arc(0:60:1) arc(120:180:1) -- cycle;
     \draw[cRed,line width=0.9pt] \Pzero -- \Pone;
     \node[cRed,font=\AnnFont] at \LblDeltaC {$\delta$};
     \seeddot{\Pzero}{$p_0$}{\LblCpzero}
     \seeddot{\Pone}{$p_1$}{\LblCpone}
     \seeddot{\Ptwo}{$p_2$}{\LblCptwo}
  \end{scope}

  \begin{scope}[shift={(\LegX,\LegY)},font=\AnnFont,text=black]
     \fill[cRed] (0,0) circle(1.7pt);
     \node[anchor=west] at (0.1,0) {seed};
     \fill[cSafe,fill opacity=0.6] ({\LegGap-0.1},-0.08) rectangle ++(0.2,0.16);
     \draw[cGray,dashed,line width=0.6pt] ({\LegGap-0.1},-0.08) rectangle ++(0.2,0.16);
     \node[anchor=west] at ({\LegGap+0.15},0) {safe with seed};
     \draw[cBlue,line width=1.1pt] ({2*\LegGap-0.1},-0.08) rectangle ++(0.13,0.13);
     \draw[cOrange,line width=1.1pt] ({2*\LegGap-0.03},-0.05) rectangle ++(0.13,0.13);
     \node[anchor=west] at ({2*\LegGap+0.15},0) {safe with each other};
     \fill[cClosure,fill opacity=0.55] ({3*\LegGap+0.2},-0.08) rectangle ++(0.2,0.16);
     \node[anchor=west] at ({3*\LegGap+0.45},0) {future-proof};
  \end{scope}

\end{tikzpicture}

%% file: sections/case-study.tex
\section{Case study}
\label{sec:experiment}
In this section, we consider two classical robotic applications, autonomous driving and warehouse delivery, 
and show that identifying compatible agent sets is non-trivial even with simple environments and agents. 
Specifically, we formulate each application within our framework and demonstrate that admitting all seed-compatible agents can lead to an incompatible set, 
and that the choice of seed strongly shapes the resulting future-proof agents. 
We further examine whether a seed admits zero testing loss and describe the zero-loss SOT when it exists.
\color{black}
\subsection{Autonomous driving: chain of vehicles on a single lane}
\label{sec:exp-braking}
\textbf{Setup.}
Consider a chain of vehicles that follow one another on a straight lane with a common velocity $v_0$ and a common bumper-to-bumper gap $g_0$, as shown in~\Cref{fig:braking-chain}.

The vehicles differ in their braking capability, which we parameterize as $\pi = (b,\tau)$, where $b \in (0, b_{\lim}]$ is the emergency braking rate and $\tau\geq 0$ is the reaction delay. Hence the agent space is the continuous set $\Pi = (0, b_{\lim}] \times \mathbb{R}_{\ge 0}$.

In such car-following scenarios, whenever the leading vehicle brakes abruptly, the following vehicle must react and brake in time to avoid a collision.
\color{black}
This safety requirement can be checked from the minimum signed gap between consecutive agents over the entire rollout,
where the gap is negative whenever the two agents overlap.
Formally, the outcome space is $\mathcal{O} = [-L, +\infty)$, where $L$ is the vehicle length,
and the set of unacceptable outcomes is $\mathcal{U} = [-L, 0)$.
\color{black}

The outcome generator $\Phi$ initiates an emergency brake of the leading vehicle and forward-simulates the resulting motion of all vehicles according to their braking capabilities.
It then records the minimum signed gap attained during the rollout.

\textbf{Seed-compatible agents do not form a compatible set.}
We begin with a single seed agent, i.e. $F_0=\{\pi^*\}$, and identify the corresponding seed-compatible agents.
Let $(\pi_1, \pi_2)$ denote a chain segment in which $\pi_2$ follows $\pi_1$.
An agent $\pi$ is compatible with $\pi^*$ if and only if the following three conditions hold.
\begin{enumerate}
  \item In the segment $(\pi^*, \pi)$, the braking distance of $\pi$ must be sufficiently short, which depends on both of its parameters.
  \item In the segment $(\pi, \pi^*)$, the braking rate of $\pi$ must be sufficiently mild for $\pi^*$ to follow safely.
  \item In the segment $(\pi, \pi)$, the two agents share the same braking rate, so $\pi$ must react before the gap is exhausted.
\end{enumerate}

The agents that satisfy all three conditions constitute the set of seed-compatible agents, as shown in~\Cref{fig:braking-single-seed-a}.

However, not all seed-compatible agents are compatible with each other.
Consider two agents $\pi_1=(b_1, \tau')$ and $\pi_2=(b_2, \tau')$ with $b_1<b_2$, where $\tau'=g_0/v_0$ is the slowest admissible reaction.
When $\pi_1$ follows $\pi_2$, a rear-end collision is inevitable, because $\pi_1$ consumes the entire gap during its reaction delay and then needs additional distance to stop.
By~\Cref{thm:union}, this counterexample implies that the maximal compatible sets are non-unique.
In fact, as \Cref{fig:braking-single-seed-b} reveals, there are infinitely many maximal compatible sets, and each of them is strictly smaller than the seed-compatible set.
Each maximal compatible set is determined by a maximal allowed braking rate and the corresponding longest braking distance,
which together can be interpreted as the convention that all of its agents commit to.
The future-proof agents, in contrast, are those that are compatible with any braking rate allowed by the seed.
In other words, they do not pre-commit to any convention.

\textbf{Future-proof agents.}
Next, we illustrate how the choice of seed shapes the future-proof agents.
Since the maximal compatible sets are determined by the maximal braking rate and the longest braking distance,
we let $B \coloneqq \sup_{F_0} b$ denote the maximal braking rate in the seed.
Likewise, we let $h(b,\tau) \coloneqq v_0\tau + v_0^2/2b$ denote the braking distance of an agent and $H \coloneqq \sup_{F_0} h$ the longest braking distance in the seed.
The closure of the seed $F_0$ then has the following expression.
\begin{equation}
  \begin{aligned}
    Cl(F_0) = \bigl\{ (b,\tau) \in \Pi \;\big|\; & b \le B, \\
    & h(b,\tau) \le \max\{H,\ g_0 + \tfrac{v_0^2}{2b_{\lim}}\} \bigr\}.
  \end{aligned}
  \label{eq:braking-closure}
\end{equation}
In words, a future-proof agent must brake no harder than any seed agent, and it must stop no later than the slowest-stopping seed agent unless it can safely follow even the strongest breaker in $\Pi$.
As the seed expands, both $B$ and $H$ are non-decreasing.
Therefore, the closure $Cl(F_0)$ is non-decreasing as well.

\Cref{fig:braking-single-seed-b,fig:braking-seed-sets-a,fig:braking-seed-sets-b} show how the maximal compatible sets and the future-proof agents evolve as the seed expands.
From~\Cref{fig:braking-single-seed-b} to \Cref{fig:braking-seed-sets-a}, the seed grows from a singleton to a set of agents.
As a result, the maximal compatible sets that do not contain the new seed are ruled out, and the set of future-proof agents grows.
The limit of this process is reached when the seed commits to a single convention.
In \Cref{fig:braking-seed-sets-b}, only one maximal compatible set remains, and every seed-compatible agent is then future-proof.

We note that a seed admitting a single maximal compatible set is desirable for testing.
However, identifying such a seed is challenging for large agent spaces and complex outcomes,
because an analytical expression of the closure, such as~\Cref{eq:braking-closure}, is usually unavailable.
Therefore, how to select a seed that represents the desired conventions remains an open question.
\begin{figure}[t]
  \centering
  \resizebox{\columnwidth}{!}{\input{figs/braking_chain.tex}}
  \caption{Vehicles following one another on a straight road at uniform gap $g_0$ and speed $v_0$. The outcome generator triggers an emergency stop of the leading vehicle (red) and monitors for rear-end collisions between adjacent vehicles.}
  \label{fig:braking-chain}
\end{figure}

\begin{figure}[t]
  \centering
  \begin{subfigure}[b]{0.49\columnwidth}
    \centering
    \includegraphics[width=\linewidth]{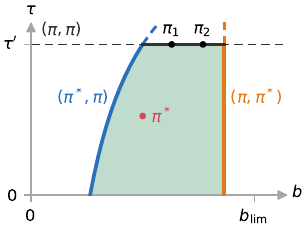}
    \caption{}
    \label{fig:braking-single-seed-a}
  \end{subfigure}
  \hfill
  \begin{subfigure}[b]{0.49\columnwidth}
    \centering
    \includegraphics[width=\linewidth]{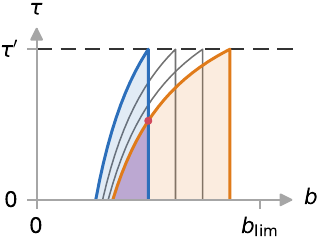}
    \caption{}
    \label{fig:braking-single-seed-b}
  \end{subfigure}
  \\[4pt]
  \begin{subfigure}[b]{0.49\columnwidth}
    \centering
    \includegraphics[width=\linewidth]{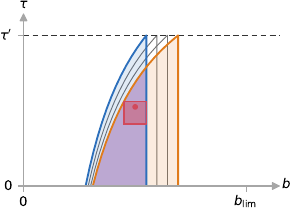}
    \caption{}
    \label{fig:braking-seed-sets-a}
  \end{subfigure}
  \hfill
  \begin{subfigure}[b]{0.49\columnwidth}
    \centering
    \includegraphics[width=\linewidth]{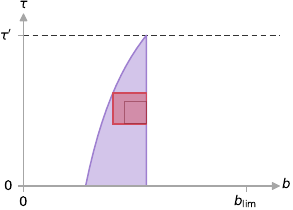}
    \caption{}
    \label{fig:braking-seed-sets-b}
  \end{subfigure}
  \caption{Seed-compatible, maximal compatible, and future-proof agents in the single-lane example.
  (a) The green region depicts all agents that are individually compatible with the seed $\pi^*$. The pair $(\pi_2, \pi_1)$ demonstrates that the set is not itself compatible.
  (b) Each outline shows a maximal compatible set, and the purple region depicts their intersection, namely the future-proof closure.
  (c) Extending the seed (red box) from $\pi^*$ in (b) admits more future-proof agents, so the purple region grows.
  (d) Further extending the seed (larger red box) collapses all maximal compatible sets to a unique one, and all seed-compatible agents become future-proof.}
  \label{fig:braking-single-seed}
\end{figure}

\textbf{SOTs with zero testing loss.}
We now show that in this example every non-empty seed admits zero testing loss.
Consider again the seed $F_0=\{\pi^*\}$.
Every seed-compatible agent is uniquely identified by the outcomes of the three experiments $(\bullet, \bullet)$, $(\pi^*, \bullet)$, and $(\bullet, \pi^*)$.
Specifically, the first outcome uniquely identifies $\tau$, and the last two are strictly monotone in $b$ once $\tau$ is known, so they uniquely identify $b$.
The same reasoning holds when the seed contains more agents.
As a result, all future-proof agents are separable for every non-empty seed, and hence SOTs with zero testing loss exist.

Let $d(\pi_1,\pi_2)$ denote the outcome of the segment $(\pi_1,\pi_2)$.
For $F_0=\{\pi^*\}$, a zero-loss SOT admits an agent $\pi$ iff
\begin{equation}
  \begin{aligned}
    & d(\pi, \pi^*) \ge d(\pi^*, \pi^*) \ \text{ and } \\
                        & d(\pi^*, \pi) \ge \min\{ d(\pi^*, \pi^*),\, (\tfrac{v_0^2}{2b^*} - \tfrac{v_0^2}{2b_{\lim}}) \}.
  \end{aligned}
  \label{eq:braking-perfect-sot}
\end{equation}
The first condition requires the admitted agent to brake no harder than the seed agent.
The second condition requires its braking distance to be no longer than that of the seed agent, or short enough to safely follow the strongest braker that can exist in the agent space.
Since the seed is compatible with itself, we have $d(\pi^*,\pi^*)\geq0$.
Therefore, \Cref{eq:braking-perfect-sot} is stricter than the seed-compatibility condition, as expected, since the latter offers no guarantee of set compatibility.

\subsection{Warehouse delivery: multi-agent path finding in grid world}
\label{sec:exp-warehouse}
\begin{figure}[t]
  \centering
  \providecommand{\flatbox}[1]{\raisebox{\depth}{#1}}%
  \resizebox{\columnwidth}{!}{%
  \begin{tabular}{@{}c@{\hspace{2pt}}c@{}}
   \resizebox{!}{5cm}{\flatbox{\begin{tabular}{@{}c@{}}%
     \input{figs/wh_map_mask.tex}\\[3pt]%
     \resizebox{7.68cm}{!}{\input{figs/wh_legend.tex}}%
   \end{tabular}}} &
   \resizebox{!}{5cm}{\flatbox{\begin{tabular}{@{}c@{}}%
     \input{figs/enc_a_resolved.tex}\\[4pt]%
     \input{figs/enc_b_collision.tex}\\[4pt]%
     \input{figs/enc_c_deadlock.tex}%
   \end{tabular}}}
  \end{tabular}}
  \caption{Warehouse delivery setup. \textbf{Left.} An example configuration. The yielding zone of the blue robot at the center is shown, and the dashed cells are the candidate starts and goals. \textbf{Right.} An encounter can lead to three outcomes, namely the conflict is resolved, the agents collide, or both agents yield.}
  \label{fig:wh-setup}
\end{figure}
\textbf{Setup.}
We consider a set of agents operating in a grid world with obstacles, 
as illustrated in~\Cref{fig:wh-setup}. 
Each agent acts in a decentralized fashion.

All agents share the same path-planning strategy. 
At every tick, an agent moves to the adjacent cell with the lowest cost-to-go to its goal cell, 
unless it must yield to other agents. 
The agents differ in their \emph{yielding zone}, which is a subset of the eight cells surrounding the agent in the $3\times3$ block ahead of it.
An agent stops only if an unfinished agent occupies a cell in its yielding zone.
Since the yielding zone has $2^8$ possible options, the agent space has cardinality $|\Pi| = 256$.

\color{black}
As shown in~\Cref{fig:wh-setup}, an encounter between agents with different yielding zones may result in conflict resolution, a collision, or a deadlock.
We say that a combination of agents is acceptable if, in all arrangements, the agents are collision-free and reach their goals within a reasonable amount of time. 
The outcome is a pair of Booleans that indicate the occurrence of a collision and of a timeout, so $\mathcal{O} = 2^2$ and $\mathcal{U} = \mathcal{O}\setminus\{(1,1)\}$.
We set the time limit of an instance to 10 times the longest cost-to-go among its tasks.
\color{black}

The outcome generator $\Phi$ runs a synchronized simulation on a collection of maps and goal assignments, 
and it monitors the occurrence of collisions and timeouts. 
The configurations used in this case study are detailed in~\Cref{app:wh-implementation}.

Randomness is introduced to break symmetry and mitigate deadlocks.
Specifically, an agent yields for only 1, 2, or 3 ticks with equal probability, and then it moves to an adjacent free cell.
In addition, when multiple cells have the same cost-to-go, one of them is picked at random.

\textbf{Self-compatible agents are rarely compatible with one another.}
We begin by identifying all self-compatible agents. In our framework, these are the seed-compatible agents of the empty seed $F_0 = \emptyset$.
As shown in~\Cref{fig:self-compat}, 38 of the 256 agents are self-compatible.
Every one of them yields to at least one of the two front-diagonal cells.
We group them by that side into 18 \emph{yield-to-left} agents, 16 \emph{yield-to-right} agents, and 4 agents that yield to both sides.

\begin{figure}[t]
  \centering
  \includegraphics[width=\columnwidth]{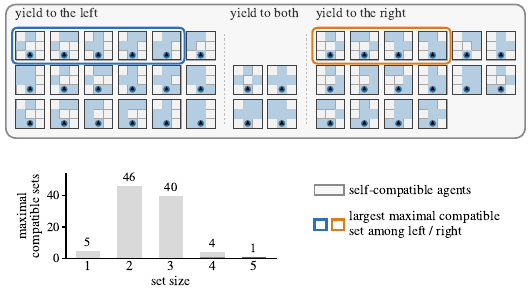}
  \caption{The 38 self-compatible agents, each drawn as a robot with its yielding zone, arranged by the side they yield to. The outlines mark the largest compatible set among the left-yielding agents (five agents) and among the right-yielding agents (four agents). No maximal compatible set contains more than five agents.}
  \label{fig:self-compat}
\end{figure}

Self-compatibility, however, does not imply mutual compatibility.
The 38 agents induce 96 maximal compatible sets, whose sizes are shown in the histogram of~\Cref{fig:self-compat}.
Most of the maximal compatible sets consist of a single pair of agents, and the largest set contains only five agents.
As a result, if all self-compatible agents are permitted in the environment, they are very likely to be incompatible with each other.

Although the compatibility results may change due to randomness, 
a collision between a yield-to-left and a yield-to-right agent is unavoidable. 
Therefore, any maximal compatible set contains agents of at most one of these two kinds. 
In other words, each maximal compatible set commits to a single yielding convention.

\begin{figure}[t]
  \centering
  \includegraphics[width=\columnwidth]{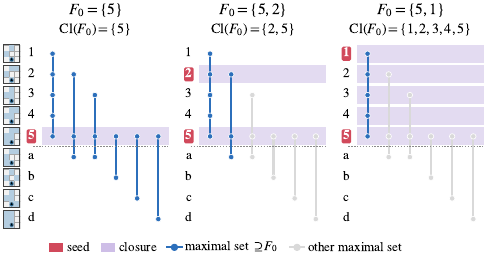}
  \caption{Maximal compatible sets containing agent 5, one per column. Agents 1--5 form the only compatible set of five, and a--d are other agents. With $F_0=\{5\}$, the closure is the seed itself. Adding agent 2 leaves two maximal sets and $\mathrm{Cl}(F_0)=\{2,5\}$, whereas adding agent 1 leaves only the five-agent set, so $\mathrm{Cl}(F_0)=\{1,\dots,5\}$.}
  \label{fig:seed-selection}
\end{figure}

\textbf{The seed determines which maximal compatible set is admitted.}
We now examine how the choice of seed shapes the future-proof agents,
using the five-agent compatible set labeled 1 to 5 in~\Cref{fig:seed-selection}.
The figure also shows all six maximal compatible sets to which agent 5 belongs.
With the single-agent seed $F_0=\{5\}$, the closure is the intersection of all plotted maximal compatible sets, which have only the seed in common.
How much the closure grows depends on which agent is added to the seed.
Adding agent 2 leaves two maximal compatible sets and enlarges the closure only to $\{2,5\}$.
Adding agent 1 instead leaves the five-agent set as the unique maximal compatible set, so the closure becomes $\{1,\dots,5\}$ and every seed-compatible agent is future-proof.
In other words, the second seed agent commits the arena to one convention, and committing to the more selective agent 1 admits more agents.

\textbf{Zero testing loss requires a unique maximal compatible set.}
In this case study, a seed admits a zero-loss SOT only if it lies in a unique maximal compatible set.
The restriction stems from the Boolean outcome.
Every seed-compatible agent yields the same outcome, namely neither a collision nor a timeout, in all experiments with the seed, so the seed-compatible agents form a single $F_0$-equivalence class.
Hence, if the seed lies in several maximal compatible sets, not all seed-compatible agents are future-proof, and none of the future-proof agents is separable from the non-future-proof ones.
Consequently, no future-proof SOT admits any agent.

There are two ways to enable a meaningful SOT. 
One is to enrich the outcome space with more information, such as the trajectories of all agents. 
The other is to choose the seed such that it admits a unique maximal compatible set.

%% file: figs/braking_chain.tex
\begin{tikzpicture}[line join=round,x=1cm,y=1cm,>=Stealth]
  \definecolor{cRed}{HTML}{D1495B}
  \definecolor{cBlue}{HTML}{2C6FBB}
  \definecolor{cDark}{HTML}{222222}
  \definecolor{cRoad}{HTML}{7C7C7C}

  \def\HL{0.45}   
  \def\HH{0.25}   

  \fill[cRoad] (-2.15,-0.64) rectangle (9.35,0.64);
  \draw[white,line width=1.3pt] (-2.15, 0.48) -- (9.35, 0.48);
  \draw[white,line width=1.3pt] (-2.15,-0.48) -- (9.35,-0.48);
  \fill[white,opacity=0.9]
      (-1.85,0) -- (-1.60,0.18) -- (-1.60,0.07) -- (-1.05,0.07)
      -- (-1.05,-0.07) -- (-1.60,-0.07) -- (-1.60,-0.18) -- cycle;

  \foreach \i/\x in {0/0,1/1.5,2/3.0,3/4.5,4/6.0,5/7.5}{
    \ifnum\i=0
      \fill[cRed,rounded corners=2pt] (\x-\HL,-\HH) rectangle (\x+\HL,\HH);
      \draw[cDark,line width=0.5pt,rounded corners=2pt] (\x-\HL,-\HH) rectangle (\x+\HL,\HH);
    \else
      \fill[cBlue,rounded corners=2pt] (\x-\HL,-\HH) rectangle (\x+\HL,\HH);
      \draw[cDark,line width=0.5pt,rounded corners=2pt] (\x-\HL,-\HH) rectangle (\x+\HL,\HH);
    \fi
    \node[white,font=\small\bfseries] at (\x,0) {\i};
  }

  \foreach \x in {8.35,8.6,8.85}{ \fill[cDark] (\x,0) circle(1.4pt); }

  \draw[->,cDark,line width=0.9pt] (-0.48,0) -- (-0.95,0);
  \node[cDark,font=\small] at (-0.70,0.19) {$v_0$};

  \draw[<->,cDark,line width=0.7pt] (0.45,0) -- (1.05,0);
  \node[cDark,font=\small] at (0.75,0.19) {$g_0$};
\end{tikzpicture}

%% file: figs/wh_map_mask.tex
\definecolor{whObs}{RGB}{60,60,60}%
\definecolor{whA}{RGB}{31,119,180}%
\definecolor{whB}{RGB}{44,160,44}%
\definecolor{whC}{RGB}{255,127,14}%
\definecolor{whD}{RGB}{148,103,189}%
\definecolor{whHi}{RGB}{214,40,40}%
\providecommand{\whhd}[3]{\begin{scope}[shift={(#1,#2)},rotate=#3]%
  \fill[black] (0,0.34)--(-0.21,-0.05)--(0.21,-0.05)--cycle;\end{scope}}%
\providecommand{\whrobot}[4]{\fill[#4] (#1+0.5,#2+0.5) circle (0.32);%
  \draw[black!55,line width=0.2pt] (#1+0.5,#2+0.5) circle (0.32);\whhd{#1+0.5}{#2+0.5}{#3}}%
\providecommand{\whstart}[3]{\draw[#3,line width=1pt] (#1,#2) rectangle ++(1,1);}%
\providecommand{\whgoal}[3]{\draw[#3,line width=1pt] (#1,#2) rectangle ++(1,1);%
  \draw[#3,line width=0.9pt] (#1+0.24,#2+0.24) rectangle ++(0.52,0.52);}%
\begin{tikzpicture}[scale=0.48]
  \fill[gray!7] (0,0) rectangle (16,12);
  \fill[whA,fill opacity=0.30] (5,4) rectangle ++(1,1);
  \fill[whA,fill opacity=0.30] (4,5) rectangle ++(1,1);
  \fill[whA,fill opacity=0.30] (6,5) rectangle ++(1,1);
  \fill[whA,fill opacity=0.30] (6,6) rectangle ++(1,1);
  \draw[gray!25,step=1] (0,0) grid (16,12);
  \draw[black,line width=0.8pt] (0,0) rectangle (16,12);
  \foreach \c/\r in {0/11,1/11,3/11,4/11,6/11,7/11,9/11,10/11,12/11,13/11,14/11,15/11,0/10,15/10,0/8,3/8,
     4/8,6/8,9/8,10/8,12/8,15/8,0/7,3/7,5/7,6/7,9/7,11/7,12/7,15/7,0/5,15/5,
     0/4,3/4,4/4,6/4,9/4,10/4,12/4,15/4,3/3,5/3,6/3,9/3,11/3,12/3,0/2,15/2,
     0/1,15/1,0/0,1/0,3/0,4/0,6/0,7/0,9/0,10/0,12/0,13/0,14/0,15/0}
     \fill[whObs] (\c,\r) rectangle ++(1,1);
  \draw[black!45,line width=0.9pt] (4,4) rectangle (7,7);
  \foreach \c/\r in {2/11,5/11,8/11,11/11,0/9,15/9,5/8,11/8,4/7,10/7,0/6,15/6,5/4,11/4,0/3,4/3,10/3,15/3,2/0,5/0,8/0,11/0}
     \draw[black!55,line width=0.5pt,dash pattern=on 1.2pt off 1pt] (\c+0.14,\r+0.14) rectangle ++(0.72,0.72);
  \whgoal{5}{11}{whA}\whgoal{11}{8}{whB}\whgoal{10}{7}{whC}\whgoal{4}{3}{whD}
  \whstart{0}{9}{whB}\whrobot{0}{9}{-90}{whB}
  \whstart{11}{0}{whC}\whrobot{11}{0}{0}{whC}
  \whstart{2}{11}{whD}\whrobot{2}{11}{180}{whD}
  \whstart{5}{4}{whA}\whrobot{5}{4}{0}{whA}
\end{tikzpicture}

%% file: figs/wh_legend.tex
\definecolor{whObs}{RGB}{60,60,60}%
\definecolor{whA}{RGB}{31,119,180}%
\definecolor{whB}{RGB}{44,160,44}%
\definecolor{whC}{RGB}{255,127,14}%
\definecolor{whHi}{RGB}{214,40,40}%
\begin{tikzpicture}
  \draw[whA,line width=1pt] (0,0.55) rectangle ++(0.32,0.32);
  \fill[whA] (0.16,0.71) circle (0.11);\begin{scope}[shift={(0.16,0.71)},scale=0.42]\fill[black] (0,0.34)--(-0.21,-0.05)--(0.21,-0.05)--cycle;\end{scope}
  \node[anchor=west,font=\small] at (0.42,0.71) {start (robot)};
  \draw[whA,line width=1pt] (2.45,0.55) rectangle ++(0.32,0.32);
  \draw[whA,line width=0.8pt] (2.53,0.63) rectangle ++(0.16,0.16);
  \node[anchor=west,font=\small] at (2.87,0.71) {goal};
  \draw[black!55,line width=0.5pt,dash pattern=on 1.2pt off 1pt] (4.16,0.59) rectangle ++(0.24,0.24);
  \node[anchor=west,font=\small] at (4.52,0.71) {candidate start/goal};
  \fill[whObs] (0,0) rectangle ++(0.32,0.32);
  \node[anchor=west,font=\small] at (0.42,0.16) {obstacle};
  \fill[whA,fill opacity=0.30] (2.0,0)--(2.32,0.32)--(2.0,0.32)--cycle;
  \fill[whC,fill opacity=0.30] (2.0,0)--(2.32,0)--(2.32,0.32)--cycle;
  \draw[white,line width=0.4pt] (2.0,0)--(2.32,0.32);
  \draw[gray!60,line width=0.3pt] (2.0,0) rectangle ++(0.32,0.32);
  \node[anchor=west,font=\small] at (2.42,0.16) {yielding zone};
  \draw[whB,line width=0.9pt] (4.86,0.16) circle (0.17);
  \node[anchor=west,font=\small] at (5.12,0.16) {proceed};
  \draw[whHi,line width=0.9pt] (6.76,0.16) circle (0.17);
  \node[anchor=west,font=\small] at (7.02,0.16) {stop};
\end{tikzpicture}

%% file: figs/enc_a_resolved.tex
\definecolor{whA}{RGB}{31,119,180}   
\definecolor{whC}{RGB}{255,127,14}   
\definecolor{gogc}{RGB}{46,160,67}   
\definecolor{stopc}{RGB}{214,40,40}  
\providecommand{\rbt}[4]{\fill[#4] (#1,#2) circle (0.24);\draw[black!55,line width=0.2pt](#1,#2) circle(0.24);%
  \begin{scope}[shift={(#1,#2)},rotate=#3]\fill[black](0,0.24)--(-0.15,-0.08)--(0.15,-0.08)--cycle;\end{scope}}
\providecommand{\gridbg}{\fill[gray!5](0,0)rectangle(3,3);\draw[gray!25,step=1](0,0)grid(3,3);%
  \draw[black,line width=0.7pt](0,0)rectangle(3,3);\draw[gray!55,line width=0.6pt](1,1)rectangle(2,2);}
\providecommand{\mcell}[3]{\fill[#1,fill opacity=0.30] (#2,#3) rectangle ++(1,1);}%
\providecommand{\msplit}[2]{\fill[whA,fill opacity=0.30] (#1,#2)--(#1+1,#2+1)--(#1,#2+1)--cycle;%
  \fill[whC,fill opacity=0.30] (#1,#2)--(#1+1,#2)--(#1+1,#2+1)--cycle;%
  \draw[white,line width=0.4pt] (#1,#2)--(#1+1,#2+1);}
\providecommand{\goring}[2]{\draw[gogc,line width=0.8pt,draw opacity=0.7] (#1,#2) circle (0.42);}%
\providecommand{\stopring}[2]{\draw[stopc,line width=0.8pt,draw opacity=0.7] (#1,#2) circle (0.42);}%
\providecommand{\boom}{\foreach \a in {0,45,90,135,180,225,270,315}\draw[stopc,line width=1.2pt](1.5,1.5)--++(\a:0.40);\fill[stopc](1.5,1.5)circle(0.10);}

\begin{tikzpicture}[x=0.82cm,y=0.82cm]
  \gridbg
  \mcell{whA}{1}{2}
  \mcell{whC}{1}{0}
  \msplit{0}{1}
  \msplit{1}{1}
  \draw[gray!25,step=1](0,0)grid(3,3);\draw[gray!55,line width=0.6pt](1,1)rectangle(2,2);
  \rbt{0.5}{1.5}{-90}{whA}
  \rbt{1.5}{0.5}{0}{whC}
  \goring{0.5}{1.5}
  \stopring{1.5}{0.5}
  \node[font=\large] at (1.5,-0.42) {conflict resolved};
\end{tikzpicture}

%% file: figs/enc_b_collision.tex
\definecolor{whA}{RGB}{31,119,180}   
\definecolor{whC}{RGB}{255,127,14}   
\definecolor{gogc}{RGB}{46,160,67}   
\definecolor{stopc}{RGB}{214,40,40}  
\providecommand{\rbt}[4]{\fill[#4] (#1,#2) circle (0.24);\draw[black!55,line width=0.2pt](#1,#2) circle(0.24);%
  \begin{scope}[shift={(#1,#2)},rotate=#3]\fill[black](0,0.24)--(-0.15,-0.08)--(0.15,-0.08)--cycle;\end{scope}}
\providecommand{\gridbg}{\fill[gray!5](0,0)rectangle(3,3);\draw[gray!25,step=1](0,0)grid(3,3);%
  \draw[black,line width=0.7pt](0,0)rectangle(3,3);\draw[gray!55,line width=0.6pt](1,1)rectangle(2,2);}
\providecommand{\mcell}[3]{\fill[#1,fill opacity=0.30] (#2,#3) rectangle ++(1,1);}%
\providecommand{\msplit}[2]{\fill[whA,fill opacity=0.30] (#1,#2)--(#1+1,#2+1)--(#1,#2+1)--cycle;%
  \fill[whC,fill opacity=0.30] (#1,#2)--(#1+1,#2)--(#1+1,#2+1)--cycle;%
  \draw[white,line width=0.4pt] (#1,#2)--(#1+1,#2+1);}
\providecommand{\goring}[2]{\draw[gogc,line width=0.8pt,draw opacity=0.7] (#1,#2) circle (0.42);}%
\providecommand{\stopring}[2]{\draw[stopc,line width=0.8pt,draw opacity=0.7] (#1,#2) circle (0.42);}%
\providecommand{\boom}{\foreach \a in {0,45,90,135,180,225,270,315}\draw[stopc,line width=1.2pt](1.5,1.5)--++(\a:0.40);\fill[stopc](1.5,1.5)circle(0.10);}

\begin{tikzpicture}[x=0.82cm,y=0.82cm]
  \gridbg
  \mcell{whA}{0}{1}
  \mcell{whA}{1}{2}
  \mcell{whC}{1}{0}
  \mcell{whC}{2}{1}
  \msplit{1}{1}
  \draw[gray!25,step=1](0,0)grid(3,3);\draw[gray!55,line width=0.6pt](1,1)rectangle(2,2);
  \rbt{0.5}{1.5}{-90}{whA}
  \rbt{1.5}{0.5}{0}{whC}
  \boom
  \goring{0.5}{1.5}
  \goring{1.5}{0.5}
  \node[font=\large] at (1.5,-0.42) {in collision};
\end{tikzpicture}

%% file: figs/enc_c_deadlock.tex
\definecolor{whA}{RGB}{31,119,180}   
\definecolor{whC}{RGB}{255,127,14}   
\definecolor{gogc}{RGB}{46,160,67}   
\definecolor{stopc}{RGB}{214,40,40}  
\providecommand{\rbt}[4]{\fill[#4] (#1,#2) circle (0.24);\draw[black!55,line width=0.2pt](#1,#2) circle(0.24);%
  \begin{scope}[shift={(#1,#2)},rotate=#3]\fill[black](0,0.24)--(-0.15,-0.08)--(0.15,-0.08)--cycle;\end{scope}}
\providecommand{\gridbg}{\fill[gray!5](0,0)rectangle(3,3);\draw[gray!25,step=1](0,0)grid(3,3);%
  \draw[black,line width=0.7pt](0,0)rectangle(3,3);\draw[gray!55,line width=0.6pt](1,1)rectangle(2,2);}
\providecommand{\mcell}[3]{\fill[#1,fill opacity=0.30] (#2,#3) rectangle ++(1,1);}%
\providecommand{\msplit}[2]{\fill[whA,fill opacity=0.30] (#1,#2)--(#1+1,#2+1)--(#1,#2+1)--cycle;%
  \fill[whC,fill opacity=0.30] (#1,#2)--(#1+1,#2)--(#1+1,#2+1)--cycle;%
  \draw[white,line width=0.4pt] (#1,#2)--(#1+1,#2+1);}
\providecommand{\goring}[2]{\draw[gogc,line width=0.8pt,draw opacity=0.7] (#1,#2) circle (0.42);}%
\providecommand{\stopring}[2]{\draw[stopc,line width=0.8pt,draw opacity=0.7] (#1,#2) circle (0.42);}%
\providecommand{\boom}{\foreach \a in {0,45,90,135,180,225,270,315}\draw[stopc,line width=1.2pt](1.5,1.5)--++(\a:0.40);\fill[stopc](1.5,1.5)circle(0.10);}

\begin{tikzpicture}[x=0.82cm,y=0.82cm]
  \gridbg
  \msplit{0}{1}
  \msplit{1}{0}
  \msplit{1}{1}
  \draw[gray!25,step=1](0,0)grid(3,3);\draw[gray!55,line width=0.6pt](1,1)rectangle(2,2);
  \rbt{0.5}{1.5}{-90}{whA}
  \rbt{1.5}{0.5}{0}{whC}
  \stopring{0.5}{1.5}
  \stopring{1.5}{0.5}
  \node[font=\large] at (1.5,-0.42) {both yield};
\end{tikzpicture}

%% file: sections/conclusion.tex
\section{Conclusion}
\label{sec:conclusion}
This paper establishes a theory of compatibility for multi-agent systems.
The framework is general in three respects.
First, it applies to any multi-agent system and any specification, since an arena requires only an agent space, an outcome generator, and a set of unacceptable outcomes.
Second, it is agnostic to agent design, because agents enter the theory solely through the outcomes they produce.
Third, it accounts for both the agents that exist today and the agents admitted in the future, which is the setting in which deployed systems evolve.
Within this framework, we formalized compatibility, characterized the future-proof agents permitted by a seed as its closure, and introduced seed-only tests together with a lower bound on their testing loss.

The results open two research directions.
The first is seed selection.
The seed determines not only the extent of the closure but also the testing loss, as both case studies demonstrate.
A principled method for choosing a seed that admits many future-proof agents while keeping the testing loss small is therefore of immediate practical value.
The second is the design of future-proof SOTs.
The separable set bounds what any future-proof SOT can admit, yet computing it requires the closure and is in general intractable.
Practical SOTs must therefore approximate the separable set from inside, and the central question is how to trade permissiveness against computational cost.

Throughout this work, the agent space is fixed, whereas in practice it grows as new designs appear.
For a given seed, such growth can only reduce the existing future-proof agents,
because new agents may form additional maximal compatible sets with which some existing future-proof agents are incompatible.
The same holds for the separable future-proof agents.
Indeed, a new agent may be $F_0$-equivalent to an existing future-proof agent without being future-proof itself.
Enlarging the seed counters both effects, as it enlarges the closure and refines the equivalence classes.
We therefore regard the stability of the closure and the separable set under an expanding agent space as part of the seed selection problem, which we leave for future work.

%% file: sections/appendix.tex
\appendices
\crefalias{section}{appendix}

\section{Proofs}
\label{app:proofs}

\begin{proof}[Proof of \Cref{prop:subset-closed}]
    $\mathrm{Arr}(F) \subseteq \mathrm{Arr}(G)$, so $\Phi(a) \cap \mathcal{U} = \emptyset$ holds for every $a \in \mathrm{Arr}(F)$.
\end{proof}

\begin{proof}[Proof of \Cref{prop:existence}]
    The idea is to grow $F_0$ by adding agents as long as compatibility is preserved.
    By Zorn's lemma, this process terminates at a maximal set as long as every chain of growing compatible sets has a compatible upper bound.

    Order $\mathcal{C}(F_0)$ by inclusion, which is non-empty since $F_0 \in \mathcal{C}(F_0)$.
    Let $\{F_i\}_{i \in I} \subseteq \mathcal{C}(F_0)$ be a chain.
    If $I = \emptyset$, then $F_0$ is an upper bound of the chain.
    Otherwise, let $F^\infty \coloneqq \bigcup_{i \in I} F_i$ and let $a = (a_1, \dots, a_n) \in \mathrm{Arr}(F^\infty)$.
    For each entry $a_k$, there exists $i_k \in I$ such that $a_k \in F_{i_k}$.
    Since the chain is totally ordered, the largest of the $n$ sets $F_{i_1}, \dots, F_{i_n}$ contains all entries, and we denote it by $F_{i^\ast}$.
    Since $F_{i^\ast}$ is compatible and $a \in \mathrm{Arr}(F_{i^\ast})$, we have $\Phi(a) \cap \mathcal{U} = \emptyset$.
    As $a$ is arbitrary, $F^\infty$ is compatible, and $F_0 \subseteq F^\infty$ gives $F^\infty \in \mathcal{C}(F_0)$.
    Hence every chain in $\mathcal{C}(F_0)$ has an upper bound in $\mathcal{C}(F_0)$, and by Zorn's lemma $\mathcal{C}(F_0)$ has a maximal element $F^\star$ with respect to inclusion.
    That is, $F^\star \cup \{\pi\} \notin \mathcal{C}(F_0)$ for every $\pi \in \Pi \setminus F^\star$, so $F^\star$ is a maximal compatible set from $F_0$ by \Cref{def:maximal-pop}.
\end{proof}

\begin{property}[Future-proof closure of a seed is a closure operator]
\label{prop:closure-operator}
The Future-proof closure operator $Cl(\cdot)$ is a closure operator on the space of compatible sets $\mathcal{C}$, i.e. it satisfies
\begin{enumerate}
    \item[(C1)] $\forall F_0 \in \mathcal{C},\ F_0 \subseteq Cl(F_0)$
    \item[(C2)] $\forall F_0, F_1 \in \mathcal{C},\ F_0 \subseteq F_1 \Rightarrow Cl(F_0) \subseteq Cl(F_1)$
    \item[(C3)] $\forall F_0 \in \mathcal{C},\ Cl(Cl(F_0)) = Cl(F_0)$
\end{enumerate}
\end{property}
\begin{proof}[Proof of \Cref{prop:closure-operator} (closure operator)]
    (C1) holds because $F_0 \subseteq F$ for all $F \in \mathcal{M}(F_0)$, so $F_0 \subseteq \bigcap_{F \in \mathcal{M}(F_0)} F = Cl(F_0)$.
    
    (C2) holds by \Cref{prop:closure-monotone}.

    (C3) holds Let $F_0 \subseteq \Pi$ be a compatible set. We denote $Cl(F_0)=G_0$.
$\forall F\in\mathcal{M}(F_0)$, we have that $F$ is a maximal compatible set and $G_0\subseteq F$. Thus $F\in \mathcal{M}(G_0)$.
Similarly, $\forall G\in\mathcal{M}(G_0)$, we have that $G$ is a maximal compatible set and $F_0\subseteq G_0\subseteq G$. Thus $G\in \mathcal{M}(F_0)$.
Therefore, $\mathcal{M}(F_0) = \mathcal{M}(G_0)$, and we have $Cl(F_0) = Cl(G_0)=Cl(Cl(F_0))$.
\end{proof}





\section{Implementation details of~\Cref{sec:exp-warehouse}}
\label{app:wh-implementation}
The map collection three corridor scenarios, which size $12\times16$, $16\times22$ and $20\times34$, and 4, 5 and 6 agents respectively. 
360 instances are created by randomly select the start and goal assignments from the pockets on one of the three maps. 
The trivial instances where blind agents can solve are filtered out and not counted in the 360 instances.